\documentclass[reqno,12pt,letterpaper]{amsart}

\usepackage{amsmath,amssymb,amsthm,graphicx,url}
\usepackage[usenames,dvipsnames]{color}
\usepackage[colorlinks=true,linkcolor=Red,citecolor=Green]{hyperref}

\usepackage{microtype}

\def\?[#1]{\textbf{[#1]}\marginpar{\Large{\textbf{??}}}}

\let\epsilon=\varepsilon 
\newtheorem*{lemma*}{Lemma}
\newtheorem*{theorem*}{Theorem}
\newtheorem*{proposition}{Proposition}
\numberwithin{equation}{section}
\DeclareMathOperator{\spec}{Spec}
\DeclareMathOperator{\tr}{tr}
\DeclareMathOperator{\res}{Res}
\newcommand{\inp}[1]{\left\langle #1 \right\rangle}
\newcommand{\w}{\omega}

\title[Trace formulas for TBG]{Trace formulas for TBG}
\author{Henry Zeng}
\address{Massachusetts Institute of Technology, Cambridge, MA 02139, USA}
\date{}
\thanks{This note is based on work done as an undergraduate at the University of California, Berkeley.}

\begin{document}

\begin{abstract}
Becker et al \cite{Becker_2022} produced a striking trace formula 
for the sum of fourth powers of 
generalized magic angles, $ \theta $, for the chiral model \cite{Tarnopolsky_2019} of twisted bilayer
graphene (TBG) with the exact Bistritzer--MacDonald potential \cite{Bistritzer_2011} :  $\sum \theta^{4} = 8 \pi / \sqrt 3 $. The purpose of this note is 
to generalize this formula to a larger class of potentials satisfying the symmetries of the model.
\end{abstract}

\maketitle

\section{Introduction and statement of result}
We consider the chiral model of twisted bilayer graphene
\begin{equation}
    H(\alpha) = \begin{pmatrix}
        0 & D(\alpha)^* \\ D(\alpha) & 0
    \end{pmatrix},\quad
    D(\alpha) = \begin{pmatrix}
        2D_{\bar z} & \alpha U(z)\\ \alpha U(-z) & 2D_{\bar z}
    \end{pmatrix},\quad \alpha \in \mathbb C,
    \label{eq:BMH}
\end{equation}
where $z = x+iy \in \mathbb C$, $D_{\bar z} = \frac{1}{2i}(\partial_x + i\partial_y)$, 
and $U$ is a complex potential.
The parameter $\alpha$ corresponds to the inverse angle of twisting.
Our goal is to generalize a formula for the sums of inverse powers of magic $\alpha$, 
previously obtained when $U$ is the Bistritzer-MacDonald potential.

Let
\begin{equation}
    \Lambda = \mathbb Z \oplus \omega\mathbb Z \subset \mathbb C,\quad\, \omega = e^{2\pi i/3}.
\end{equation}
Fix the inner product $\inp{z,w} = \text{Re}\,(z\overline w)$.
Then the dual lattice is
\begin{equation}
    \Lambda^* = \{\lambda \in \mathbb C: \text{ for all }z \in \Lambda,\, \inp{z,\lambda} \in 2\pi \mathbb Z\} = -\frac{4\pi i}{\sqrt 3}\Lambda.
\end{equation}
Let $K = \frac{4\pi}{3}$. 
We assume the potential $U$ satisfies
\begin{align}
    U(z+\lambda) &= e^{i\inp{\lambda, K}}U(z) \quad
    \text{ for all }\lambda \in  \Lambda^*,\label{eq:latticesym}
    \\
    U(\omega z) &= \omega U(z).
    \label{eq:rotsym}
\end{align}

For $k \in \mathbb C/\Lambda^*$, let
$L^2_k$ to be the space of locally square-integrable functions on $\mathbb C$ such that
\(
    u(z+\lambda) = e^{i\inp{k,\lambda}}u(z),
\)
equipped with the natural inner product.
Let
\begin{equation}
    L^2_k(\mathbb C; \mathbb C^2)
    = L^2_{k-K}\oplus L^2_{k+K}.
\end{equation}
Similarly, define Sobolev spaces $H^1_k$ and $H^1_k(\mathbb C;\mathbb C^2)$.
By the symmetry \eqref{eq:latticesym},
$D(\alpha)$ defines an operator from $H_k^1(\mathbb C;\mathbb C^2)$ to $L_k^2(\mathbb C;\mathbb C^2)$.

Using the second symmetry \eqref{eq:rotsym}, 
the following property was shown in \cite{Becker_2022}:
\begin{align}
\label{eq:flatbands}
    &k = \pm K \,\,(\text{mod }\Lambda^*)\quad \Rightarrow \quad \ker_{H^1_k(\mathbb C;\mathbb C^2)} D(\alpha) \ne 0\,\text{ for all }\alpha \in \mathbb C,\\
    &k \ne \pm K \,\,(\text{mod }\Lambda^*)\quad \Rightarrow \quad 
    \left( \ker_{H^1_k(\mathbb C;\mathbb C^2)}D(\alpha) \ne 0
    \text{ if and only if } \alpha \in \mathcal A\right).\notag
\end{align}
Here $\mathcal A \subset \mathbb C$ is a set not depending on $k$.

The set $\mathcal A$ determined by the above property is known as the set of \textit{magic} $\alpha$.
There is a notion of multiplicity for
magic $\alpha$ --- see section \ref{s:trace_intro}.
The set $\mathcal A$ 
for the Bistritzer-MacDonald potential
is shown in Figure \ref{f:angles}.

\begin{figure*}
\centering
    \includegraphics[scale=0.7]{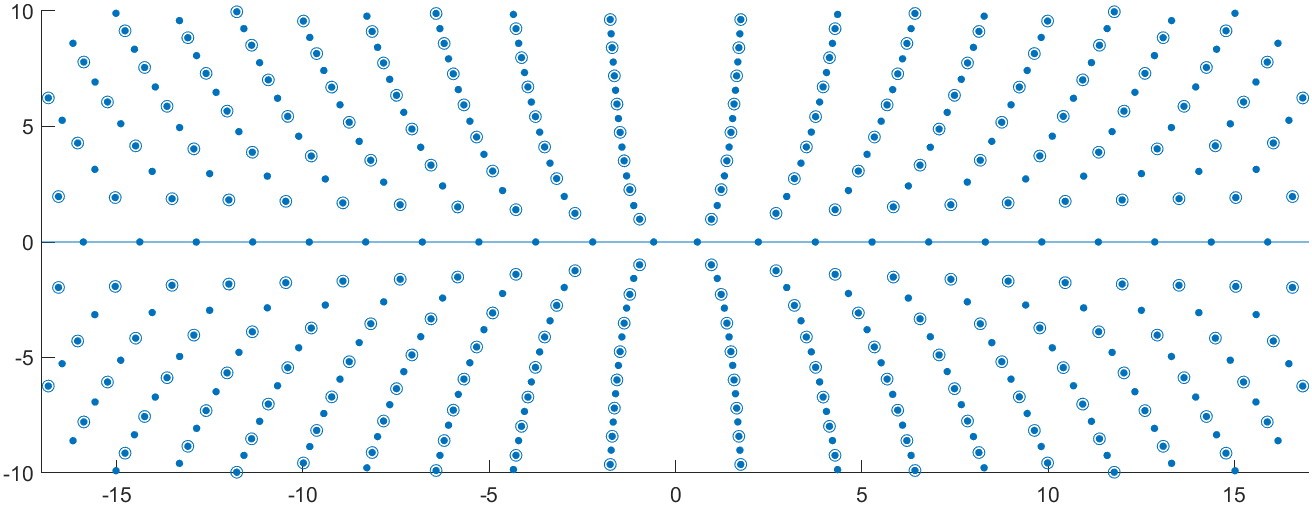}
    \caption{The set of magic $\alpha$,
    depicted with multiplicity.
    }
    \label{f:angles}
\end{figure*}


In general a potential $U$ satisfying \eqref{eq:latticesym}
and \eqref{eq:rotsym}
has a decomposition in the Fourier basis
  \begin{equation}
U(z) = \sum_{l \in S} c_l U_l(z), \quad\quad U_l(z) = e^{i\inp{z, l}} + \w e^{i\inp{z, \w l}} + \w^2 e^{i\inp{z, \w^2 l}},  \label{eq:U_family_0}
\end{equation}
where $S$ is a set of representatives for the  equivalence classes
of $K+\Lambda^*$ under rotation by $2\pi/3$.
We restrict to the following subclass of potentials:
\begin{equation}
  U(z) = \sum_{n \in 1 + 3 \mathbb Z} c_nU_{nK},\quad\quad
  c_n =0\textrm{ for all but finitely many $n$}.
 \label{eq:U_family}
\end{equation}

\begin{theorem*}
  For a potential of the form \eqref{eq:U_family},
  \begin{equation}
  \label{eq:t1}
  \sum_{\alpha \in \mathcal A}\alpha^{-4} = 
   \frac{8\sqrt 3 \pi}{K^4}\sum_{n,m} \frac{c_n^2c_m^2 + 2 c_n^2c_mc_{-n-m}}{n^2 + nm + m^2},
  \end{equation}
  where the sum on the left is taken with multiplicities
  in the sense of section \ref{s:trace_intro}.
\end{theorem*}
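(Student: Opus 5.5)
Following Becker et al., I would convert the magic-$\alpha$ problem into a spectral problem for a compact operator. Fix $k \notin \pm K + \Lambda^*$. Then $2D_{\bar z}$ is invertible on $H^1_k(\mathbb C;\mathbb C^2)$, and
\[
D(\alpha) = 2D_{\bar z}\bigl(I + \alpha T_k\bigr), \qquad T_k = (2D_{\bar z})^{-1}\begin{pmatrix} 0 & U(z)\\ U(-z) & 0\end{pmatrix}.
\]
By \eqref{eq:flatbands}, $\alpha \in \mathcal A$ exactly when $-1/\alpha \in \spec T_k$, with multiplicity given by algebraic multiplicity as in section \ref{s:trace_intro}. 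The operator $T_k$ is in every Schatten class $S^p$ with $p>2$ (it is a first-order inverse times a bounded multiplier in two dimensions). Hence $T_k^4$ is trace class, and Lidskii's theorem gives
\[
\sum_{\alpha\in\mathcal A}\alpha^{-4} = \tr T_k^4 .
\]
The block structure of $T_k$ kills odd powers, so the $\alpha^{-4}$ sum is the first nontrivial one. The result is independent of $k$, so I am free to choose a convenient $k$, or to average over $k$ in the dual torus.

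Next I would compute $\tr T_k^4$ in the Fourier basis. In $L^2_{k\mp K}$ the functions $e^{i\inp{z,\mu}}$ with $\mu \in k\mp K+\Lambda^*$ form a basis, and $2D_{\bar z}$ acts on them diagonally by multiplication by $\bar\mu$ (up to the sign/conjugation convention fixed by $\inp{z,\mu}=\mathrm{Re}(z\bar\mu)$). Multiplication by $U_{nK}$ shifts frequency by $\omega^j nK$ with coefficient $\omega^j$, and $U(-z)$ shifts by $-\omega^j nK$. Writing $T_k^2$ as the product of the two off-diagonal blocks, $\tr T_k^4$ becomes a sum over closed 4-step lattice walks:
\[
\tr T_k^4 = 2\sum_{\mu}\sum_{\text{shifts}} \frac{\text{(products of } c\text{'s and } \omega\text{'s)}}{\bar\mu_1\bar\mu_2\bar\mu_3\bar\mu_4},
\]
where the shifts $s_1,\dots,s_4$ alternate between the $U(z)$ block (frequencies $\omega^j nK$) and the $U(-z)$ block (frequencies $-\omega^j mK$), and sum to zero. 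Solving the closure condition $\omega^{a}nK - \omega^{b}mK + \omega^{c}pK - \omega^{d}qK = 0$ over $n,m,p,q\in 1+3\mathbb Z$ is where the two coefficient patterns $c_n^2c_m^2$ and $c_n^2c_mc_{-n-m}$ should appear. The first corresponds to back-and-forth pairings. The second uses $\sum_j \omega^j =0$, i.e.\ a three-term relation $n + m\omega + (-n-m)\omega^2$-type cancellation.

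The individual lattice sums $\sum_\mu \prod (\bar\mu+\cdot)^{-1}$ are only conditionally convergent. To evaluate them I would integrate over $k$ in the fundamental domain of $\Lambda^*$, which the $k$-independence of the trace allows. Averaging turns $\sum_\mu$ into an integral over $\mathbb C$ of a rational function of $\bar\zeta$ with four simple poles. Handled as in Becker et al.\ (a Cauchy/Stokes argument on a large disk), this should yield $\pi$ times an explicit residue combination, essentially $\pi/\overline{(\text{difference of pole positions})}\cdot(\ldots)$. For the surviving closed walks the pole differences are $nK$, $\omega mK$, etc. After combining conjugate terms, denominators of the form $|n - \omega m|^2K^2 = (n^2+nm+m^2)K^2$ appear; together with the extra $K^{-2}$ from the other factors, this gives the prefactor $8\sqrt3\pi/K^4$, where $\sqrt3$ comes from the area of the fundamental domain of $\Lambda^*$.

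The main obstacle is this last step. One must justify exchanging the conditionally convergent lattice sum with the $k$-average, and carefully handle the boundary/regularization contribution from non-absolutely-integrable $|\zeta|^{-2}$-type terms. One must also check that all non-closing walks, and degenerate walks with coinciding poles, contribute zero. I would organize this by partial fractions in $\bar\zeta$, noting that terms $\int \bar\zeta^{-j}$ with $j\ge 2$ vanish after averaging while $\int (\bar\zeta-a)^{-1}(\zeta\text{-free})$ pieces give $\pi\bar a/|a|^2$-type contributions. The final step is then to match this against the Bistritzer--MacDonald case $c_1=1$ as a sanity check: there the sum has only $n=m=1$ and $n=1,m=1$ with $c_{-2}=0$, giving $8\sqrt3\pi/(3K^4)\cdot$const, which should reproduce $8\pi/\sqrt3$ after rescaling.
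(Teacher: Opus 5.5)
Your skeleton matches the paper's. You use Lidskii, expand $\inp{A_k^2e_l,e_l}$ as a finite sum over closed walks, turn $k$-independence into a residue formula, and then do casework. Getting the residue formula by averaging over $\mathbb C/\Lambda^*$, unfolding to $\int_{\mathbb C}f\,dA$ and applying Stokes is a valid alternative to the paper's triangle-contour argument. It makes the constant transparent: $\tau_2=-\frac{\pi}{|\mathbb C/\Lambda^*|}\sum_b\bar b\,\res(f,b)=-\frac{\sqrt3}{8\pi}\sum_b\bar b\,\res(f,b)$, which is the paper's lemma. Note that the weight is linear in the pole. The input is $\int_{|\zeta|<R}(\zeta-a)^{-1}\,dA=-\pi\bar a$ for $|a|<R$, not a $\pi\bar a/|a|^2$-type expression. (With $\inp{z,\mu}=\mathrm{Re}(z\bar\mu)$ one has $2D_{\bar z}e^{i\inp{z,\mu}}=\mu\, e^{i\inp{z,\mu}}$, so the integrand is rational in $\zeta$.)

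\textbf{The gap: walks with coinciding poles.} For steps $a_1,-b_1,a_2,-b_2$ the poles sit at the partial sums $0,\ a_1,\ a_1-b_1,\ a_1-b_1+a_2$. So $a_1=b_1$ makes $0$ a double pole, and $a_2=b_1$ makes $a_1$ a double pole. Every back-and-forth pairing, which is every walk producing $c_n^2c_m^2$, is therefore degenerate. This contradicts both your ``four simple poles'' picture and your claim that degenerate walks contribute zero.

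For $c_1=1$ every closed walk is a pairing. Write walks as $(a_1,b_1,a_2,b_2)$. The walks $(K,K,\w^sK,\w^sK)$ and $(K,\w^sK,\w^sK,K)$ contribute $0$. The walk $(K,K,K,K)$ has $g=z^{-2}(z-K)^{-2}$ and $\sum_b\bar b\,\res(g,b)=-2K^{-2}$; with its two rotations it gives the entire right side of \eqref{eq:t1}. Discarding degenerate walks gives $0$.

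\textbf{The actual analytic obstacle.} These double poles are the real difficulty, not conditional convergence: each walk term is $O(|\mu|^{-4})$, so the lattice sums converge absolutely. The problem is that $|z-b|^{-2}$ is not locally integrable in the plane, so term-by-term unfolding fails. To fix this, excise $\epsilon$-disks around all poles (a $\Lambda^*$-periodic set) before unfolding. Since $\oint_{|z-b|=\epsilon}\overline{(z-b)}(z-b)^{-j}dz=0$, a pole of any order then enters only through $\bar b\,\res(g,b)$. The paper avoids the issue by using line integrals along edges joining points of $\Lambda^*$.

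\textbf{The closure equation is not solved.} If $A+B\w+C\w^2=0$ with $A,B,C\in\mathbb Z$, then $A=B=C$. Combined with $n_j\equiv 1\pmod 3$, this yields exactly five families of walks. They include collinear walks $\w^r(n_1,n_2,n_3,n_4)K$ with $n_1-n_2+n_3-n_4=0$. Their coefficients $c_{n_1}c_{n_2}c_{n_3}c_{n_4}$ do not appear in \eqref{eq:t1}, so you must show these walks vanish. The paper does this by subtracting $\sum_b\res(zg,b)=0$. This replaces the weight $\bar b$ by $\bar b-b$ at simple poles, which kills the non-pairing collinear walks because their poles are all real and simple.
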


\section{Spectral characterization and trace formulas}

\label{s:trace_intro}
In this section we review some previous results.
Note that multiplication by $e^{i\inp{k,z}}$
maps $L^2_0(\mathbb C;\mathbb C^2)$
to $L^2_k(\mathbb C;\mathbb C^2)$,
and we have
\begin{equation}
    e^{-i\inp{k,z}}D(\alpha)(e^{i \inp{k,z}}u) = (D(\alpha) + k)u.
\end{equation}
Now for $k \not\in \{-K,K\} + \Lambda^*$,
$2D_{\bar z} + k$ is invertible on $L^2_0(\mathbb C; \mathbb C^2)$ and we can factorize
\begin{equation}
  D(\alpha) + k = (2D_{\bar z}+k) (1 + \alpha T_k),\quad\quad
  T_k := (2D_{\bar z}+k)^{-1} 
  \begin{pmatrix}
    0 & U(z)\\ U(-z) & 0
  \end{pmatrix}.
\end{equation}
The magic angles are then the inverse eigenvalues of a compact operator $T_k$:
\begin{equation}
\alpha \in \mathcal A \Leftrightarrow -\frac{1}{\alpha} \in \spec_{L_0^2(\mathbb C;\mathbb C^2)} T_k.   \label{eq:compact_spec}
\end{equation}
This was first observed in  \cite{Becker_2022}.
By the property \eqref{eq:flatbands}, we could take in \eqref{eq:compact_spec}
any $$k \not\in \{-K,K\} + \Lambda^* .$$
In the present note, by multiplicity of a magic angle we will always mean
algebraic multiplicity as an eigenvalue of $T_k$.
This is independent of the choice of $k$.

In view of \eqref{eq:compact_spec},
we can try to study the trace of powers of $T_k$.
In particular, for $s \ge 3$, the operator $T_k^{s}$
is trace-class (\cite{Becker_2022}).
Note that for odd values of $s$,
the operator $T_k^s$ has only off-diagonal components,
and has trace zero.
Thus one is really interested in the trace of powers of
\begin{equation}
\begin{gathered} 
T_k^2 = 
\begin{pmatrix}
T_k^+  & 0\\
  0 & T_k^-
\end{pmatrix},\quad\quad
T_k^\pm :=
  (2D_{\bar z}+k)^{-1}U(\pm z) (2D_{\bar z}+k)^{-1}U( \mp z).
  \end{gathered}
\end{equation}
Since the trace in invariant under cyclic permutations,
define for 
$k \not\in \{K,-K\} + \Lambda^*$
the Hilbert-Schmidt operator
\begin{equation}
  A_k := (2D_{\bar z} + k)^{-1} U(z) (2D_{\bar z} + k)^{-1} U(-z): L^2_{-K} \rightarrow L^2_{-K}.
\end{equation}
Let $p \ge 2$. Then by Lidskii's theorem
\begin{equation}
\label{eq:deftau}
    \tr A_k^{p} = \frac{1}{2}\tr T_k^{2p} = \frac{1}{2}\sum_{\alpha \in \mathcal A} \alpha^{-2p},
\end{equation}
where the sum is taken with algebraic multiplicity. Denote this sum \eqref{eq:deftau} by $\tau_p$.

  
Calculating the trace, the following result was shown.
\begin{proposition}[\cite{Becker_2022}, \cite{becker2023integrability}]
  For the Bistritzer-MacDonald potential
  \begin{equation}
    \sum_{\alpha \in \mathcal A}\alpha^{-4} = \frac{8 \pi}{\sqrt 3},
  \end{equation}
  and moreover for $p \in \mathbb N + 2$,
  \begin{equation}
    \sum_{\alpha \in \mathcal A}\alpha^{-2p} \in \frac{\pi}{\sqrt 3}\mathbb Q.
  \end{equation}
  
\end{proposition}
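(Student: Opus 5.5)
The plan is to compute $\tr A_k^p$ directly in a Fourier basis of $L^2_{-K}$. I will use two facts: the result is independent of $k$, and a doubly periodic lattice sum that is constant in $k$ equals its average over a fundamental cell. That average unfolds into an integral over all of $\mathbb C$, which Cauchy--Pompeiu evaluates.

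\textbf{Step 1 (matrix form).} Use the orthonormal basis $e_\gamma = e^{i\inp{z,\gamma - K}}$, $\gamma\in\Lambda^*$, of $L^2_{-K}$. The operator $2D_{\bar z}+k$ is diagonal in this basis, with eigenvalue an affine function $\ell(\gamma-K)+k$ for a fixed $\mathbb R$-linear (in fact conjugate-linear) map $\ell$. Multiplication by $U(\pm z)=\pm\sum_j \w^j e^{\pm i\inp{z,\w^j K}}$ shifts momenta by $\pm\w^jK$ with weights $\w^j$. So $\tr A_k^p$ is a finite sum over closed paths of $2p$ steps in the momentum lattice $-K+\Lambda^*$. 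The steps alternate between the sets $\{\w^jK\}$ and $\{-\w^jK\}$, and each path carries a weight that is a power of $\w$ (times signs). For each path $P$ the contribution is $\sum_{\gamma}f_P(\gamma+k)$, where $f_P(k)=\prod_{i=1}^p (k-a_i)^{-1}$ and the $a_i\in \ell(K\mathbb Z[\w])$ are the poles visited.

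Since $p\ge2$, each $f_P$ is $O(|k|^{-p})$ at infinity. The lattice sums then converge absolutely for $p\ge3$; for $p=2$ one groups the paths as the trace-class property of $T_k^4$ dictates. The resulting function $F(k)=\tau_p$ is $\Lambda^*$-periodic and, by \eqref{eq:deftau} and the $k$-independence of $\mathcal A$, constant off $\pm K+\Lambda^*$.

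\textbf{Step 2 (unfolding).} Average over a fundamental cell $\mathcal F$ of $\Lambda^*$:
\[
\tau_p=\frac{1}{|\mathcal F|}\int_{\mathcal F}F\,dm=\frac{1}{|\mathcal F|}\sum_P\int_{\mathbb C} f_P\,dm .
\]
This is legitimate because the singularities are at worst $|k-a|^{-p}$ with repeated poles, which needs care. I would therefore first move $k$ into a regularized form: sum over paths before integrating, so that coincident poles combine, or treat coincident poles as limits of distinct ones.

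\textbf{Step 3 (the integral).} Write $f_P=\sum_i r_i(k-a_i)^{-1}$ by partial fractions. The decay $O(|k|^{-p})$ forces $\sum_i r_i a_i^m=0$ for $m\le p-2$. Cauchy--Pompeiu gives $\int_{|k|<R}(k-a)^{-1}dm=-\pi\bar a$ for $|a|<R$, and the constant parts cancel because $\sum_i r_i=0$. Hence
\[
\int_{\mathbb C}f_P\,dm=-\pi\sum_i r_i\bar a_i .
\]
Here the residues $r_i$ are products of inverses of differences $a_i-a_j$. All $a_i$ lie in $\ell(K\mathbb Z[\w])$, and $\bar a/a^{p-1}$-type quantities lie in $K^{2-p}\mathbb Q(\w)$ up to the fixed normalization of $\ell$. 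Summing the weights over the orbit of paths under $z\mapsto \w z$ kills the non-real part, so the total lies in $\pi\,\mathbb Q$ times a fixed power of $K$. Dividing by $|\mathcal F|=|\Lambda^*|=\tfrac{16\pi^2}{\sqrt3}\cdot\tfrac{\sqrt3}{2}\cdot\tfrac{1}{\sqrt3}$ (up to the normalization), the powers of $\pi$ from $K$ and from the cell area cancel. This leaves a rational multiple of $\pi/\sqrt3$.

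\textbf{Step 4 ($p=2$).} Enumerate the closed $4$-step paths explicitly: $3\times3\times3$ choices, with the last step forced. Compute $-\pi\sum r_i\bar a_i$ for each, and add to obtain $8\pi/\sqrt3$. Equivalently, $\sum\alpha^{-4}=2\tau_2$.

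\textbf{Main obstacle.} I expect the main difficulty to be the justification in Step 2. For $p=2$ the individual lattice sums are only conditionally convergent, and the poles can coincide. My first attempt would be to prove $\int_{\mathcal F}F=\sum_P\int_{\mathbb C}f_P$ by applying Stokes on expanding hexagons $\bigcup_{|\gamma|<N}(\gamma+\mathcal F)$ and checking that the boundary terms vanish as $N\to\infty$. This uses only $|f_P|\lesssim |k|^{-2}$ together with the cancellation $\sum r_i=0$. The rationality claim itself is then bookkeeping in $\mathbb Q(\w)$.
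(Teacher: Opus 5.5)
Your route to the residue formula differs from the paper's. You average the constant lattice sum over a cell, unfold to $\int_{\mathbb C} f\,dm$, and apply Cauchy--Pompeiu. The paper instead integrates the lattice sum along the segments $[0,i\w^l\sqrt3K]$, which avoid the poles, and compares the result with triangle contour integrals around the poles, so pole order never matters there. Done correctly, your route gives exactly the paper's Lemma, $\tau_p=-\frac{\pi}{|\mathcal F|}\sum_b\overline b\,\res(f,b)$ with $\pi/|\mathcal F|=\sqrt3/(8\pi)$, and it makes the constant more transparent.

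However, your analysis is aimed at the wrong place. $A_k^p$ contains $2p$ resolvents, not $p$, so each path term is $O(|k|^{-2p})$. All lattice sums therefore converge absolutely for $p\ge2$, and the obstacle at infinity you single out does not exist. The real issue is the one you mention only in passing. Individual path terms do have double poles: a path that returns to its base momentum midway picks up $(k-K)^{-2}$. Since $|k-a|^{-2}$ is not locally integrable in the plane, $\int_{\mathbb C}f_P\,dm$ is undefined for such $P$, and the simple-pole partial fractions of Step 3 are wrong for them. Splitting coincident poles is not a valid regularization: $\int_{|k-a|<\delta}\frac{dm(k)}{(k-a)(k-a-\eta)}=-\pi\bar\eta/\eta$ for $|\eta|<\delta$, which depends on the direction of $\eta$. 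Two fixes work.
\begin{enumerate}
\item Excise $\epsilon$-discs around all points of $\pm K+\Lambda^*$, translation-invariantly. The cell average of the constant $F$ is unaffected, and the interchange follows from uniform convergence. Since $\int_{B_R\setminus B_\epsilon(a)}(k-a)^{-m}\,dm=0$ for $m\ge2$, only the residues survive.
\item Sum over paths first and show that $f$ has only simple poles. In the Laurent expansion at a pole $k_0$, each factor between two consecutive singular projections is a sum over closed loops based at the singular mode. Rotating these loops by $\w$ about that mode multiplies the sum at $k=k_0$ by $\w^{N_{\mathrm{st}}-N_{\mathrm{res}}}=\w$, since a loop has one more step than resolvents. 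So the sum vanishes at $k_0$, and every Laurent coefficient of order $\ge2$ cancels.
\end{enumerate}

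The rationality step also fails as argued. Rotation by $\w$ kills nothing: a rotated path contributes exactly the same term, since the weight gains $\w^{2p}$, the residue gains $\w^{1-2p}$, and $\overline b$ gains $\overline\w$. This is the paper's remark that different $r$ contribute equally. Your bookkeeping, whose exponent should be $K^{2-2p}$ rather than $K^{2-p}$, therefore only places $\sum\alpha^{-2p}$ in $\frac{\pi}{\sqrt3}\mathbb Q(\w)$. Realness needs a reflection symmetry instead. In the paper's normalization, $l\mapsto\overline l$ fixes $K$, preserves the steps $\pm\w^jK$, and conjugates their weights $\w^j$. Hence $\overline{f(\bar k)}=f(k)$, so $\res(f,\bar b)=\overline{\res(f,b)}$ and $\sum_b\overline b\,\res(f,b)\in\mathbb R$; equivalently, use $\mathcal A=\overline{\mathcal A}$. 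Then $\mathbb Q(\w)\cap\mathbb R=\mathbb Q$ finishes the argument.

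Finally, Step 4 is only announced, and its count is off. Closing requires $\{\gamma_1,\gamma_2\}=\{\delta_1,\delta_2\}$ as multisets, so only $15$ of the $27$ choices of the first three steps close up. These are the paper's cases 1--3 with all $n_i=1$. In the paper's scaling they give $\frac{8\sqrt3\pi}{3K^4}$. The value $8\pi/\sqrt3$ follows after rescaling $z$ so that the Fourier modes of $U$ have unit length, which multiplies each $\alpha^{-4}$ by $K^4$.
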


Various additional results
regarding the trace for general potentials 
satisfying \eqref{eq:latticesym} and \eqref{eq:rotsym}
were obtained in \cite{becker2023integrability}.
They include a formula for the trace as a sum of residues (see section 3.2),
and a result similar to the second part of the above proposition, for general potentials.
We recommend interested readers to take a look.

We now prove the theorem. 
We will compute the trace of $A_k^2$ using the Fourier basis.
The calculation will be simplified by taking advantage of the invariance of the trace for different values of $k$.

\section{Proof of Theorem}
\subsection{$A_k^p$ in fourier basis}

For section 3.1, fix
$k \in \mathbb C \setminus (\{K,-K\} + \Lambda^*)$
and $p \ge 2$.

Consider the orthonormal basis $\{e_l\}_{l \in -K+\Lambda^*}$ of $L^2_{-K}$:
\begin{equation}
e_l := \frac{\sqrt 2}{3^{1/4}}\,e^{i\inp{z,l}}, \quad\quad l \in -K+\Lambda^*.
\end{equation}
We have
\begin{equation}
\text{tr}(A_k^p) = \sum_{l \in -K+\Lambda^*} \inp{A_k^p e_l, e_l}.
\end{equation}

Introduce operators $D$ and $J_a$,
\begin{align}
  &D: L^2_s \rightarrow L^2_s, \quad\quad  De^{i\inp{z,l}}:= (l+k)^{-1}e^{i\inp{z,l}}  \,\,\,\text{ for }l \in s+\Lambda^*,\\
  &J_a: L^2_s \rightarrow L^2_{s+a}, \quad\quad J_au := e^{i\inp{a,z}}u.
\end{align}
So
\begin{equation}
(2D_{\bar z} + k)^{-1} = D,\quad U_l(z) = J_{l} + \w J_{\w l} + \w^2 J_{\w^2 l}.
\end{equation}
Letting
\begin{equation}
  D_a:L^2_s \rightarrow L^2_s, \quad\quad  D_a e^{i\inp{z,l}}:= (l+k-a)^{-1}e^{i\inp{z,l}}  \,\,\,\text{ for }l \in s+\Lambda^*,\\
  \label{eq:shifted_D}
\end{equation}
note that
\begin{equation}
  J_a D_b = D_{a+b} J_a,   \label{eq:comm}
\end{equation}
which along with $J_aJ_b = J_{a+b}$ and $D_aD_b = D_bD_a$ gives relations for the algebra
of operators generated by $D$ and the $J_a$  (with product given by composition).

Suppose
$U(z)$ is a potential of the form \eqref{eq:U_family_0}
with finitely many nonzero Fourier modes.
Then
\begin{equation}
    A_k = 
    \sum_{\alpha \in S}
    \sum_{\beta \in S}
    \sum_{\gamma \in \{0,1,2\}}
    \sum_{\delta \in \{0,1,2\}}
    c_{\alpha} c_{\beta}\,\w^{\gamma + \delta}
  D J_{\alpha \w^{\gamma}} D J_{-\beta \w^{\delta}}.
\end{equation}
Here $S$ is a set of representatives for equivalence classes of $K + \Lambda^*$ under multiplication by
$\omega$.
So
\begin{equation}
  A_k^p = \sum_{\alpha \in S^p}
    \sum_{\beta \in S^p}
    \sum_{\gamma \in \{0,1,2\}^p}
    \sum_{\delta \in \{0,1,2\}^p} \prod_{j=1}^p\left(  c_{\alpha_j} c_{\beta_j} \,\w^{\gamma_j + \delta_j}
  D J_{\alpha_j \w^{\gamma_j}} D J_{-\beta_j \w^{\delta_j}} \right),
\end{equation}
where $\alpha_j$
refers to the $j$th component
of $\alpha \in S^p$, and similarly for
$\beta_j, \gamma_j, \delta_j$.

Applying \eqref{eq:comm} and grouping terms,
\begin{align}
  A_k^p 
  &= \sum_{\alpha, \beta \in S^p}
    \sum_{\gamma,\delta \in \{0,1,2\}^p}
  \left(\prod_{j=1}^p c_{\alpha_j} c_{\beta_j}\right) \w^{\sum_{j=1}^p (\gamma_j + \delta_j)}
  \left(\prod_{j=1}^p D_{\tilde \alpha_j} D_{\tilde \beta_j}\right) J_{\sum_{j=1}^p (\alpha_j \w^{\gamma_j} - \beta_j \w^{\delta_j})} 
\end{align}
where
\begin{equation}
  \tilde \alpha_j = \sum_{i=1}^{j-1} (\alpha_i \w^{\gamma_i} - \beta_i \w^{\delta_i}),\quad\quad
  \tilde \beta_j = \alpha_j\w^{\gamma_j} +  \sum_{i=1}^{j-1} (\alpha_i \w^{\gamma_i} - \beta_i \w^{\delta_i}).
\end{equation}

Define
\begin{equation}
\Theta_p = \left\{(\alpha,\beta,\gamma,\delta) \in S^p \times S^p \times \{0,1,2\}^p\times \{0,1,2\}^p\,:\, \sum_{j=1}^p (\alpha_j \w^{\gamma_j} - \beta_j \w^{\delta_j}) = 0\right\}.
\end{equation}

Then by \eqref{eq:shifted_D},
\begin{align}
\inp{A_k^p e_l,e_l} &= \sum_{\Theta_p}
  \left(\prod_{j=1}^p c_{\alpha_j} c_{\beta_j}\right) \w^{\sum_{j=1}^p (\gamma_j + \delta_j)}
  \inp{  \left(\prod_{j=1}^p D_{\tilde \alpha_j} D_{\tilde \beta_j}\right) e_l, e_l}\\
  &= \sum_{ \Theta_p}
  \left(\prod_{j=1}^p c_{\alpha_j} c_{\beta_j}\right) \w^{\sum_{j=1}^p (\gamma_j + \delta_j)}
  \prod_{j=1}^p \frac{1}{(k+l-\tilde \alpha_j)(k+l-\tilde \beta_j)}.   \label{eq:fourierAkp}
\end{align}

\subsection{A useful formula for the trace}

Using the fact that the trace of $A_k^p$ is the same for different values of $k\not\in \{-K,K\} + \Lambda^*$,
a formula for the trace
was obtained in Theorem 4 of \cite{becker2023integrability}.
It is given (with slight modification) below.
\begin{lemma*}
  Suppose $U(z)$ is a potential 
  satisfying \eqref{eq:latticesym} and \eqref{eq:rotsym},
  with finitely many nonzero Fourier modes.
  Let $p \ge 2$.
  The function
  $f: \mathbb C \setminus (\{-K,K\} + \Lambda^*) \rightarrow \mathbb C$,
  \begin{equation}
    f(k) := \inp{A_k^p e_{-K}, e_{-K}},
  \end{equation}
  is meromorphic on $\mathbb C$
  with a finite number of poles,
  and we have
  \begin{equation}
    \tau_p =
    -\frac{\sqrt{3}}{8\pi} \sum_{b \in \{-K,K\}+\Lambda^*} \overline b \,\, \res(f,b).
  \end{equation}
  (Note the sum is finite.)
\end{lemma*}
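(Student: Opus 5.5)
The plan is to show that $\operatorname{tr}A_k^p$ is the $\Lambda^*$-periodization of $f$. Since it is independent of $k$, averaging it over a fundamental cell turns into an integral of $f$ over $\mathbb C$. A Stokes/residue computation with the weight $\overline k$ then evaluates that integral.

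\emph{Step 1: $f$ is rational and $\operatorname{tr}A_k^p=\sum_{\lambda}f(k+\lambda)$.} By \eqref{eq:fourierAkp}, $\inp{A_k^pe_l,e_l}=g(k+l)$ with
\[
g(w)=\sum_{\Theta_p}\Bigl(\prod_j c_{\alpha_j}c_{\beta_j}\Bigr)\w^{\sum_j(\gamma_j+\delta_j)}\prod_{j=1}^p\frac{1}{(w-\tilde\alpha_j)(w-\tilde\beta_j)} .
\]
With finitely many Fourier modes, $\Theta_p$ is finite, so $g$ is a finite sum of rational functions. Each term decays like $|w|^{-2p}\le |w|^{-4}$. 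The poles lie at $\tilde\alpha_j\in\Lambda^*$ and $\tilde\beta_j\in K+\Lambda^*$, and there are only finitely many of them.

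Since $f(k)=g(k-K)$, $f$ is rational with finitely many poles, all in $\{-K,K\}+\Lambda^*$ (the two shifted families). Writing $l=-K+\lambda$, we get
\[
\tau_p=\operatorname{tr}A_k^p=F(k):=\sum_{\lambda\in\Lambda^*}f(k+\lambda)\quad\text{for all } k\notin\{\pm K\}+\Lambda^*.
\]
The series converges absolutely and locally uniformly away from the poles because $|f|=O(|k|^{-4})$.

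\emph{Step 2: averaging.} Fix a fundamental cell $P$ of $\Lambda^*$. Its area is $|P|=\frac{16\pi^2}{3}\cdot\frac{\sqrt3}{2}=\frac{8\pi^2}{\sqrt3}$. For small $\epsilon>0$, remove from $\mathbb C$ the discs $B(b,\epsilon)$, $b\in\{\pm K\}+\Lambda^*$. Since $F\equiv\tau_p$, we have
\[
\tau_p\,|P\setminus\cup B|=\int_{P\setminus \cup B}F\,dm .
\]
By periodicity and absolute convergence, this equals $\int_{\mathbb C\setminus\cup B}f\,dm$, understood as $\lim_{R\to\infty}$ of the integral over $\{|k|<R\}\setminus\cup B$. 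The tail is controlled by $\int_{|k|>R}|k|^{-4}\to0$, and only finitely many excised discs meet the poles of $f$; elsewhere $f$ is bounded.

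Working with the excised domain is deliberate. It sidesteps the fact that higher-order poles of $f$ need not be integrable, while $F$ itself is constant.

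\emph{Step 3: Stokes with weight $\overline k$.} For holomorphic $f$, $d(\overline k f\,dk)=f\,d\overline k\wedge dk=2i f\,dx\,dy$. Hence
\[
\int_{\{|k|<R\}\setminus\cup B} f\,dm=\frac1{2i}\Bigl(\oint_{|k|=R}\overline k f\,dk-\sum_b\oint_{|k-b|=\epsilon}\overline k f\,dk\Bigr).
\]
The outer integral is $O(R\cdot R\cdot R^{-4})\to0$. For the inner integrals, write $\overline k=\overline b+\overline{(k-b)}$. The $\overline b$ part gives exactly $2\pi i\,\overline b\,\res(f,b)$. For the remainder, use the Laurent expansion of $f$ at $b$: a term $(k-b)^{-j}$ with $j\ge1$ contributes
\[
\oint\overline w\,w^{-j}\,dw=i\epsilon^{2-j}\int_0^{2\pi}e^{-ij\theta}\,d\theta=0,
\]
and the holomorphic part contributes $O(\epsilon^2)$. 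Thus the inner sum tends to $2\pi i\sum_b\overline b\,\res(f,b)$ as $\epsilon\to0$.

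Letting $\epsilon\to0$ in Step 2 gives
\[
\tau_p\cdot\frac{8\pi^2}{\sqrt3}=\frac{1}{2i}\Bigl(-2\pi i\sum_b\overline b\,\res(f,b)\Bigr),
\]
that is, $\tau_p=-\frac{\sqrt3}{8\pi}\sum_b\overline b\,\res(f,b)$, as claimed.

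The main obstacle is the bookkeeping in Step 2. One must justify exchanging the lattice sum with the integral on the excised domain and handle the poles of $f$ that may be of high order. The intended way through is to excise discs around every point of $\{\pm K\}+\Lambda^*$, so that everything is absolutely convergent. One then only uses that $F$ is constant, never that it is integrable near the excised points.
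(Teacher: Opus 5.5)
Your proof is correct, and it takes a genuinely different (though closely related) route from the paper's.

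\emph{Your route.} You average the constant periodization $F\equiv\tau_p$ over a two-dimensional cell. You then unfold it to an area integral of $f$ over $\mathbb C$ with $\epsilon$-discs removed, and evaluate that integral by Stokes with the non-holomorphic weight $\overline k$, using $d(\overline k f\,dk)=2if\,dx\,dy$.

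\emph{The paper's route.} The paper never forms an area integral. It integrates $F$ along the three segments $[0,i\w^l\sqrt3K]$. These are the edge directions of the triangulation of $\mathbb C$ with vertex set $\Lambda^*$, whose triangle centroids are exactly $\{\pm K\}+\Lambda^*$. It then writes each $2\pi i\,\overline b\,\res(f,b)$ as the contour integral of $\overline b f$ around the triangle centred at $b$. Each edge is traversed once by a $K$-triangle and once, in the opposite direction, by a $-K$-triangle. The jump of the piecewise-constant weight $\overline b$ across an edge combines with the orientation into the same constant for every edge, so the sum collapses to $-iK$ times the same three line integrals. In effect the paper runs a discrete summation-by-parts version of your Stokes argument, with the piecewise-constant $\overline b$ in place of $\overline k$.

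\emph{What each buys.} The paper's contours stay at distance $2\pi/3$ from every pole. So it needs no excision, no $\epsilon\to0$ limit, and no Laurent analysis of $\oint\overline w\,w^{-j}\,dw$; the only analytic input is absolute summability from $f=O(|k|^{-2p})$. The price is reliance on the specific triangular geometry and on the orientation bookkeeping. Your argument is lattice-independent and more conceptual. It identifies $\tau_p$ as $|P|^{-1}$ times the regularized integral $\int_{\mathbb C}f\,dm$, and that integral as $-\pi\sum_b\overline b\,\res(f,b)$. This explains the constant $\sqrt3/(8\pi)=\pi/|P|$.

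Two small points deserve a line each in a write-up.

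First, the circle $|k|=R$ can cut through excised discs. You can exhaust instead by unions of translates of a cell whose boundary misses $\{\pm K\}+\Lambda^*$. Alternatively, note that the $O(R)$ straddling discs contribute $O(\epsilon R^{-2})=o(1)$.

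Second, since you excise infinitely many discs, the per-disc remainders must be summed, not just estimated one at a time. Each remainder equals exactly $2\pi i\,\epsilon^2a_0(b)$, where $a_0(b)$ is the constant Laurent coefficient of $f$ at $b$ (so $a_0(b)=f(b)$ off the poles). Moreover $\sum_b|a_0(b)|<\infty$ because $f=O(|k|^{-4})$. In fact $\sum_b a_0(b)=2\tau_p$, since these are the constant terms of $F$ at $K$ and at $-K$. Hence the $\epsilon^2$ terms cancel exactly against $|P\setminus\cup B|=|P|-2\pi\epsilon^2$, and your identity holds for every small $\epsilon$, not only in the limit.
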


We reproduce the proof here for clarity (as there is a slight modification).

\begin{proof}

  The fact that $f$ is meromorphic
  with a finite number of poles
  follows
  from the observation that the sum in \eqref{eq:fourierAkp}
  is finite and each term is meromorphic
  with a finite number of poles.
  Note also from \eqref{eq:fourierAkp} that
  $f(k) = O(k^{-2p})$.
  
  For any $k \in \mathbb C \setminus (\{-K,K\} + \Lambda^*)$, we know
  \begin{equation}
    \tau_p = \sum_{l \in -K + \Lambda^*} \inp{A_k^p e_l, e_l}
    = \sum_{\gamma \in \Lambda^*} f(k+\gamma),
  \end{equation}
  where the sum converges absolutely.
  This implies
  \begin{align}
    3 \sqrt 3 K \tau_p
    = \sum_{l=0}^2  \int_{[0,i\w^l \sqrt{3}K]}\left(\sum_{\gamma\in\Lambda^*} f(k+\gamma)\right) ds.
    \label{eq:threeint}
  \end{align}
  Here and for the rest of this proof, we will use the notation
  $\int_{[a,b]} g(k) ds$ to mean
  the integral over the arc-length parameter $s$
  parameterizing $k$ on
  the straight line segment from $a$ to $b$.

  On the other hand, by representing the residues as contour integrals over triangles,
  \begin{align}
    2\pi i \sum_{b \in \{-K,K\} + \Lambda^*} \overline b\,\, \res(f, b)
    &= \sum_{b \in K + \Lambda^*} \overline b \,\, \sum_{l=0}^2 \int_{[-\w^lK, -\w^{l+1}K]} f(k+b) \cdot i\w^{l-1}ds\\
    &\quad + \sum_{b \in -K + \Lambda^*} \overline b \,\, \sum_{l=0}^2 \int_{[\w^lK, \w^{l+1}K]} f(k+b) \cdot -i\w^{l-1} ds \notag\\
    & \hspace{-2cm} = \sum_{r \in \{1,-1\}} \sum_{\gamma \in \Lambda^*} \sum_{l=0}^2 ri\w^{l-1}\overline{(\gamma+rK)}
    \int_{[\gamma + r(K -\w^lK), \gamma + r(K-\w^{l+1}K)]} f(k) ds. \label{eq:bigsum}
  \end{align}
  For the second equality 
  we use the fact that the sum in \eqref{eq:bigsum} converges absolutely,
  which follows from $f(k) = O(k^{-2p})$.
  Grouping pairs of terms with the same integral,
  the sum becomes
  \begin{align}
    &= \sum_{\gamma \in \Lambda^*} \sum_{l=0}^2 i\w^{l-1} (-\w^{-l+1}K)
    \int_{[0, i\w^{l+1}\sqrt{3}K]} f(k+\gamma) ds\\
    &= -iK \sum_{\gamma \in \Lambda^*}\sum_{l=0}^2 \int_{[0, i\w^l \sqrt 3 K]}f(k+\gamma)ds.
  \end{align}
  Comparing this and \eqref{eq:threeint}, we see
  \begin{equation}
    \tau_p = -\frac{2\pi}{3\sqrt 3 K^2} \sum_{b \in \{-K,K\}+\Lambda^*} \overline b \,\, \res(f,b),
  \end{equation}
  which is the desired result.
\end{proof}

Combining the lemma and equation \eqref{eq:fourierAkp}
gives
\begin{equation}
  \tau_p =-\frac{\sqrt 3}{8 \pi}
  \sum_{\Theta_p}
  \left(\prod_{j=1}^p c_{\alpha_j} c_{\beta_j}\right) \w^{\sum_{j=1}^p (\gamma_j + \delta_j)}
  \sum_{b \in \{0,K\}+\Lambda^*} \overline b \cdot \res \left(\prod_{j=1}^p
  \frac{1}{(z-\tilde \alpha_j)(z-\tilde \beta_j)}, b \right).
  \label{eq:theformula}
\end{equation}

\subsection{Calculation of trace}
We show the theorem by evaluating equation \eqref{eq:theformula} for $p=2$
and a potential of the form
\eqref{eq:U_family}:

  By doing casework, the sum over $\Theta_2$ in \eqref{eq:theformula}
  splits into
  sums for each of the following cases:
  
  $(\w^{\gamma_1} \alpha_1, \w^{\delta_1} \beta_1, \w^{\gamma_2} \alpha_2, \w^{\delta_2} \beta_2) =$
  \begin{enumerate}
  \item $(\w^{r} n_1 K,\,\, \w^{r} n_1 K,\,\, \w^{r+s} n_2 K,\,\, \w^{r+s} n_2 K)$ for some $r \in \{ 0,1,2\}, s \in \{1,2\}$, and
    $n_1,n_2 \in 1+ 3\mathbb Z $
  \item $(\w^{r} n_1 K,\,\, \w^{r} n_2 K,\,\, \w^{r} n_3K, \,\,\w^{r} n_4K)$ for some $r \in \{ 0,1,2\}$ and $n_1,\dots, n_4 \in 1+3\mathbb Z$
    satisfying $n_1 - n_2 + n_3 - n_4 = 0$.
  \item $(\w^{r} n_1 K,\,\, \w^{r+s} n_2 K,\,\, \w^{r+s} n_2 K,\,\, \w^{r} n_1 K)$ 
  for some $r \in \{ 0,1,2\}, s \in \{1,2\}$, and
    $n_1,n_2\in 1+3\mathbb Z$
  \item $(\w^{r} n_1 K,\,\, \w^{r+s} n_2 K,\,\, \w^{r+2s} n_1 K,\,\, -\w^{r+s} (n_1+n_2)K)$ 
  for some $r \in \{ 0,1,2\}, s \in \{1,2\}$, and
    $n_1,n_2\in 1+3\mathbb Z$
  \item $(\w^{r} n_1 K,\,\, \w^{r+s} n_2 K,\,\, -\w^{r} (n_1+n_2)K,\,\, \w^{r+2s} n_2K)$  
  for some $r \in \{ 0,1,2\}, s \in \{1,2\}$, and
    $n_1,n_2\in 1+3\mathbb Z$
  \end{enumerate}
  Indeed, either $\gamma_1 = \delta_1$, or not.
  If $\gamma_1 = \delta_1$,
  either $\gamma_2 \ne \gamma_1$ (case 1),
  or $\gamma_2 = \gamma_1$ (case 2).
  If $\gamma_1 \ne \delta_1$, then
  given $n_1, n_2 \in 1 + 3 \mathbb Z$,
  there are three possibilities
  corresponding
  to each choice of $\gamma_2$ (cases 3-5).

  We claim that in each case, different values of $r$ result in the same term in \eqref{eq:theformula},
  so we may calculate the sum for terms with $r = 0$ and multiply by $3$.
  Indeed, this follows from
 \[  \begin{gathered}
    \w^{\sum (\gamma_j+1) + (\delta_j+1)} = \w \cdot \w^{\sum \gamma_j+\delta_j},   \\
        \res\left(\prod_{j=1}^2
    \frac{1}{(z- \w\tilde\alpha_j)(z-\w \tilde \beta_j)}, b \right)
    = \res\left(\prod_{j=1}^2
    \frac{1}{(z-\tilde \alpha_j)(z-\tilde \beta_j)}, \w^{-1} b \right).
  \end{gathered} \]

  Also, for any $(\alpha,\beta,\gamma,\delta)$, note the sum of the residues of
 $$ z \prod \frac{1}{(z-\tilde \alpha_j) (z - \tilde \beta_j)}$$ is $0$.
  Thus, to slightly shorten the
  calculation we first subtract, for each $(\alpha, \beta, \gamma, \delta)$, the sum of the
  residues of $ z \prod \frac{1}{(z-\tilde \alpha_j) (z - \tilde
    \beta_j)} .$
  Then the residues on $\mathbb R \setminus \{0\}$ cancel, and we have
    \begin{align*}
\tau_2
&=
-\frac{3\sqrt 3}{8 \pi} \cdot \sum_{n,m} 
\Bigg(
c_n^2 c_m^2 \sum_{s =1,2 }
\w^{-s}\left(\frac{1}{nK \cdot \w^s mK} +\frac{(\w^{-s}- \w^s) m K}{(\w^s m K)^2 (\w^s m K - n K)} 
 \right)\\
&\quad\quad +c_n^2c_m^2\frac{1}{nK \cdot mK}
+c_n^2 c_m^2 \sum_{s =1,2 }
\frac{\w^{-s}(-\w^{-s}+ \w^s) m K }{(\w^s m K)^2 (-\w^s m K - n K)}\\
&\quad\quad + c_n^2c_mc_{-n-m}
\sum_{s=1,2}\bigg(\frac{\w^s(-\w^{-s} + \w^s)m K}{(-\w^s mK)(nK-\w^smK)(-\w^{-s}nK)}\\
&\quad\quad\quad\quad
\quad\quad\quad\quad +\frac{\w^s(\w^{s} - \w^{-s})(-n-m) K}{(\w^{-s} nK)\w^s(-n-m)K(\w^{-s}nK - \w^{s}mK)}
\bigg)
\\
&\quad\quad +  c_nc_m^2c_{-n-m}
\sum_{s=1,2}\bigg(
\frac{(\w^{-s}- \w^s) m K}{-\w^s mK (nK - \w^s mK)(n+m)K}
\\
&\quad\quad\quad\quad\quad
\quad\quad\quad
+\frac{(\w^{s}-\w^{-s}) m K}{\w^{-s}mK(-n-m)K(\w^{-s}mK - nK)}
\bigg)\Bigg)
\\
&=
-\frac{2\pi}{\sqrt 3 K^4}\sum_{n,m} \Bigg(
-c_n^2 c_m^2\frac{3}{n^2 + nm + m^2}
-c_n^2 c_m^2\frac{3}{n^2 + nm + m^2}\\
&\quad\quad\quad\quad\quad
-c_n^2 c_m c_{-n-m}\frac{6}{n^2 + nm + m^2}-c_n c_m^2 c_{-n-m}\frac{6}{n^2 + nm + m^2}
\Bigg)\\
&=
\frac{4\sqrt 3 \pi}{K^4}\left( \sum_{n,m} \frac{1}{n^2 + nm + m^2}c_n^2c_m^2
    + \sum_{n,m} \frac{2}{n^2 + nm + m^2} c_n^2 c_m c_{-n-m}
    \right).
\end{align*}


\subsection*{Acknowledgements}
I would like to thank Professor Zworski
for introducing me to magic angles, and being a wonderful mentor. 
I would also like to thank Tristan Humbert for reading a previous version of this writeup and providing many helpful comments.
I also acknowledge support from the Simons Foundation
through the Targeted Grant Award No.\ 896630.

\bibliographystyle{alpha}
\bibliography{ref}

\end{document}